\documentclass[12pt,a4paper]{article}

\usepackage[utf8]{inputenc}
\usepackage[T1]{fontenc}
\usepackage{amsmath,amssymb,amsfonts,amsthm}
\usepackage{geometry}
\usepackage{graphicx}
\usepackage{hyperref}
\usepackage{url}
\usepackage{booktabs,multirow,tabularx}
\usepackage{algorithm}
\usepackage{algpseudocode}
\usepackage{tikz}
\usetikzlibrary{shapes.geometric, arrows.meta, positioning, fit, calc, backgrounds}
\usepackage{xcolor}
\usepackage{caption}
\usepackage{float}
\usepackage{listings}

\newcommand{\SHA}{\mathsf{SHA\text{-}256}}
\newcommand{\Poseidon}{\mathsf{Poseidon}}

\newcommand{\NVM}{\textnormal{\textsc{n-Vm}}}
\newcommand{\ACEGF}{\textnormal{\textsc{Ace-Gf}}}
\newcommand{\ACERuntime}{\textnormal{\textsc{Ace Runtime}}}

\newtheorem{definition}{Definition}
\newtheorem{theorem}{Theorem}[section]

\newtheorem{proposition}{Proposition}[section]

\title{n-VM: A Multi-VM Layer-1 Architecture with Shared Identity and Token State}

\author{
  \texttt{Jian Sheng Wang} \\
  Yeah LLC\\
  \texttt{jason@yeah.app}
}
\date{March 23, 2026}

\begin{document}
\maketitle

% ============================================================
\begin{abstract}
Multi-chain ecosystems suffer from fragmented identity, siloed liquidity, and bridge-dependent token transfers.
We present n-VM, a Layer-1 architecture that hosts n heterogeneous virtual machines as co-equal execution environments over shared consensus and shared state.
The design combines three components: a dispatcher that routes transactions by opcode prefix, a unified identity layer in which one 32-byte commitment anchors VM-specific addresses, and a unified token ledger that exposes VM-native interfaces such as ERC-20 and SPL over a common balance store.
We formalize routing, identity derivation, and token transfer semantics, and prove cross-VM transfer atomicity and identity isolation under standard cryptographic assumptions.
We describe a concrete instantiation with five VMs: a native runtime, EVM, SVM, Bitcoin Script, and TVM.
We also present context-based sharding and a write-set scheduler for parallel execution.
Under an analytical throughput model, the architecture admits a projected range of about 16,000 to 66,000 transactions per second on commodity hardware.

\medskip
\noindent\textbf{Keywords:} multi-VM blockchain, unified identity, cross-VM interoperability, parallel execution, context sharding, opcode routing, ACE-GF
\end{abstract}

% ============================================================
\section{Introduction}\label{sec:intro}

% ----------------------------------------------------------
\subsection{Motivation}

The blockchain ecosystem is fragmented across incompatible execution environments.
Ethereum's EVM, Solana's SVM, Bitcoin's Script, and Tron's TVM each define distinct account models, address formats, transaction semantics, and smart contract languages.
Users who wish to operate across these ecosystems must maintain separate wallets, manage distinct key pairs, and rely on cross-chain bridges---third-party systems that have historically been the largest single source of security failures in decentralized finance, with over \$2.8 billion lost to bridge exploits between 2021 and 2024~\cite{bridge-exploits}.

Recent projects have begun exploring multi-VM architectures.
Movement Labs deploys an EVM-compatible execution layer atop a Move-based settlement chain; Eclipse runs an SVM execution environment as an Ethereum rollup; Sei~v2 combines EVM and CosmWasm within a Cosmos SDK chain.
However, these approaches share a common limitation: one VM is subordinate to another (as a rollup or compatibility layer), identity remains fragmented across VMs, and token transfers between execution environments still require bridge-like mechanisms.

% ----------------------------------------------------------
\subsection{Key Insight: VM-Agnostic Identity--Authorization Separation}

The \ACEGF{} framework~\cite{acegf} introduces a separation between \emph{identity binding} and \emph{per-transaction authorization}.
A user's on-chain identity is represented by a single 32-byte commitment $\texttt{id\_com} = \Poseidon(\text{REV}, \text{salt}, \text{domain})$, derived from a root entropy value (REV) via HKDF key streams.
Per-transaction authorization uses lightweight HMAC-based attestations (${\sim}1\,\mu$s per transaction on CPU) rather than per-transaction cryptographic signatures, with the binding proof deferred to an off-critical-path zero-knowledge proof~\cite{aceruntime}.

This separation is \emph{VM-agnostic}: the identity commitment and attestation mechanism are independent of any particular execution environment.
A single $\texttt{id\_com}$ can therefore serve as the anchor for addresses across arbitrarily many VMs, provided a deterministic, collision-resistant mapping from $\texttt{id\_com}$ to each VM's native address format exists.

% ----------------------------------------------------------
\subsection{Contributions}

This paper makes the following contributions:

\begin{enumerate}
    \item \textbf{Generalized \NVM{} architecture.} We define a framework for hosting $n$ co-equal virtual machines on a single Layer-1 chain, with opcode-prefix-based transaction routing, a shared state tree, and a pluggable engine interface (Section~\ref{sec:architecture}).
    \item \textbf{Unified identity layer.} We formalize the deterministic derivation of VM-specific addresses from a single identity commitment, prove context isolation under the pseudorandom function assumption on HKDF, and describe raw-chain ingress verification for legacy wallets (Section~\ref{sec:identity}).
    \item \textbf{Unified token ledger.} We present a single-ledger token runtime that exposes ERC-20 and SPL interfaces over the same underlying balance storage, and prove that cross-VM token transfers are atomic (Section~\ref{sec:token}).
    \item \textbf{Parallel execution.} We describe a write-set-based conflict detection scheduler and a context-based sharding scheme, and analyze throughput scaling (Section~\ref{sec:parallel}).
    \item \textbf{Concrete instantiation.} We describe a Rust implementation with $n=5$ VMs (Native, EVM, SVM, BVM, TVM), including identity precompiles, cross-VM invocation hooks, and raw-chain signature verification (Section~\ref{sec:implementation}).
\end{enumerate}

\NVM{} builds upon the \ACEGF{} identity framework~\cite{acegf} for multi-stream key derivation, the \ACERuntime{} execution layer~\cite{aceruntime} for the Attest--Execute--Prove pipeline, and the VA-DAR recovery protocol~\cite{vadar} for email-anchored wallet recovery.

% ============================================================
\section{Background}\label{sec:background}

% ----------------------------------------------------------
\subsection{The ACE-GF Identity Primitive}

The Atomic Cryptographic Entity Generative Framework (\ACEGF{})~\cite{acegf} derives deterministic, purpose-specific key streams from a single high-entropy secret called the Root Entropy Value (REV).
The $i$-th key stream is:
\begin{equation}
    k_i \leftarrow \text{HKDF-SHA256}(\text{REV},\; \mathit{info}_i,\; \mathit{salt}_i).
\end{equation}

\ACEGF{} defines canonical streams for multiple cryptographic ecosystems:
\begin{itemize}
    \item Stream 1: Ed25519 (Solana signing)
    \item Stream 3: secp256k1 (EVM chains)
    \item Stream 4: secp256k1 (Bitcoin)
    \item Stream 7: ML-DSA-44 (post-quantum signing)
\end{itemize}

The on-chain identity is represented by a Poseidon hash commitment:
\begin{equation}\label{eq:idcom}
    \texttt{id\_com} = \Poseidon(\text{REV},\; \text{salt},\; \text{domain}).
\end{equation}

This 32-byte value is the primary native identity artifact visible on-chain.
In the native \ACEGF{} path, long-lived public-key disclosure can be deferred or avoided, reducing exposure to ``harvest now, decrypt later'' collection against published public keys.

% ----------------------------------------------------------
\subsection{The Attest--Execute--Prove Pipeline}

\ACERuntime{}~\cite{aceruntime} processes transactions through a three-phase pipeline:
\begin{enumerate}
    \item \textbf{Attest} (${\sim}1$--$5\,\mu$s/tx, CPU): verify a lightweight HMAC-based attestation credential for each transaction.
    \item \textbf{Execute} (${\sim}10$--$50\,\mu$s/tx): execute the transaction against the state tree.
    \item \textbf{Prove} (off critical path, GPU): generate a zero-knowledge proof binding all attestations to their respective $\texttt{id\_com}$ values.
\end{enumerate}

Phase~1 and Phase~2 reside on the critical path and determine block time (${\sim}400$\,ms).
Phase~3 runs asynchronously and produces a cryptographic finality certificate during the subsequent slot.

The \NVM{} dispatcher replaces the monolithic Execute phase with a routing layer that delegates to $n$ pluggable VM engines.

% ----------------------------------------------------------
\subsection{Existing Multi-VM Approaches}

We briefly survey existing multi-VM designs and identify their limitations.

\begin{table}[H]
\centering
\caption{Comparison of multi-VM approaches.}
\label{tab:existing}
\begin{tabularx}{\textwidth}{l>{\raggedright\arraybackslash}X>{\raggedright\arraybackslash}X>{\raggedright\arraybackslash}X}
\toprule
\textbf{System} & \textbf{VMs} & \textbf{Identity} & \textbf{Token Transfer} \\
\midrule
Movement & EVM + Move & Separate per VM & Bridge \\
Eclipse & SVM on Ethereum & Separate per VM & Bridge (L1$\leftrightarrow$L2) \\
Sei v2 & EVM + CosmWasm & Pointer contracts & CosmWasm$\leftrightarrow$EVM pointer \\
Polkadot & Per-parachain & Per-parachain & XCM message passing \\
\addlinespace
\NVM{} (this work) & $n$ co-equal VMs & Single $\texttt{id\_com}$ & Unified ledger (no bridge) \\
\bottomrule
\end{tabularx}
\end{table}

The key distinction is that \NVM{} treats all VMs as first-class citizens sharing a single consensus, identity, and token layer, rather than subordinating one VM to another.

% ============================================================
\section{Architecture}\label{sec:architecture}

% ----------------------------------------------------------
\subsection{Overview}

The \NVM{} architecture consists of four layers:

\begin{figure}[H]
\centering
\begin{tikzpicture}[
    node distance=0.6cm,
    box/.style={draw, rounded corners, minimum width=12cm, minimum height=0.8cm, align=center, font=\small},
    arrow/.style={-{Stealth[length=5pt]}, thick},
]
    \node[box, fill=yellow!15] (consensus) {Consensus Layer (BFT + Attest--Execute--Prove)};
    \node[box, fill=orange!15, below=of consensus] (dispatcher) {n-VM Dispatcher (opcode routing)};
    \node[box, fill=blue!10, below=of dispatcher, minimum height=1.2cm] (engines) {
        VM$_1$ (Native) \quad VM$_2$ (EVM) \quad VM$_3$ (SVM) \quad $\cdots$ \quad VM$_n$
    };
    \node[box, fill=green!10, below=of engines] (state) {Unified State Tree + Identity Layer + Token Ledger};

    \draw[arrow] (consensus) -- (dispatcher);
    \draw[arrow] (dispatcher) -- (engines);
    \draw[arrow] (engines) -- (state);
\end{tikzpicture}
\caption{Layered \NVM{} architecture.}
\label{fig:architecture}
\end{figure}

% ----------------------------------------------------------
\subsection{Opcode-Based Transaction Routing}

Each transaction carries a payload whose first byte determines the target VM.
We partition the 256-value opcode space into $n$ contiguous ranges:

\begin{definition}[Opcode Routing Function]\label{def:routing}
Let $\mathcal{V} = \{V_1, V_2, \ldots, V_n\}$ be the set of registered VMs.
Each $V_i$ is assigned a contiguous opcode range $[l_i, u_i] \subset [0, 255]$ where the ranges are pairwise disjoint.
The routing function is:
\[
    \mathsf{Route}(\mathit{opcode}) =
    \begin{cases}
        V_i & \text{if } l_i \le \mathit{opcode} \le u_i, \\
        \bot & \text{otherwise (reject transaction).}
    \end{cases}
\]
\end{definition}

This design has several advantages:
\begin{itemize}
    \item \textbf{$O(1)$ routing.} A single byte comparison determines the target VM, adding negligible overhead to transaction processing.
    \item \textbf{Extensibility.} New VMs can be added by registering an engine for an unoccupied opcode range, without modifying existing engines.
    \item \textbf{Determinism.} The routing decision depends solely on the transaction payload, ensuring all validators reach the same dispatch decision.
\end{itemize}

% ----------------------------------------------------------
\subsection{VM Engine Interface}

Each VM engine implements a minimal interface:

\begin{definition}[VM Engine]\label{def:engine}
A VM engine $E_i$ for VM $V_i$ implements:
\begin{enumerate}
    \item $\mathsf{vm\_id}() \to V_i$: returns the VM identifier.
    \item $\mathsf{execute}(\mathit{state}, \mathit{tx}) \to (\mathit{receipt}, \Delta)$: executes transaction $\mathit{tx}$ against state tree $\mathit{state}$, returning a receipt and a set of state changes $\Delta$.
\end{enumerate}
\end{definition}

The dispatcher maintains a registry $\mathcal{E}: \mathcal{V} \to E$ mapping each VM identifier to its engine.
Transaction execution follows Algorithm~\ref{alg:dispatch}.

\begin{algorithm}[H]
\caption{n-VM Transaction Dispatch}\label{alg:dispatch}
\begin{algorithmic}[1]
\Require Transaction $\mathit{tx}$ with payload $P$, state tree $S$
\Ensure Receipt $R$
\State $\mathit{opcode} \gets P[0]$
\State $V \gets \mathsf{Route}(\mathit{opcode})$
\If{$V = \bot$}
    \State \Return $\mathsf{Reject}(\text{``unknown opcode''})$
\EndIf
\State $E \gets \mathcal{E}[V]$
\State $S' \gets \mathsf{Snapshot}(S)$ \Comment{Save state for rollback}
\State $(R, \Delta) \gets E.\mathsf{execute}(S, \mathit{tx})$
\If{$R.\mathit{success} = \mathtt{false}$}
    \State $\mathsf{Rollback}(S, S')$ \Comment{Restore pre-execution state}
\EndIf
\State \Return $R$
\end{algorithmic}
\end{algorithm}

The snapshot/rollback mechanism ensures that a failed transaction in any VM does not corrupt the shared state tree, preserving isolation between VMs.

% ----------------------------------------------------------
\subsection{Block Execution}

A block $B = [\mathit{tx}_1, \mathit{tx}_2, \ldots, \mathit{tx}_m]$ is an ordered sequence of transactions that may target different VMs.
The dispatcher processes each transaction sequentially (in the baseline mode), applying state changes atomically:

\begin{equation}\label{eq:block}
    S_j = \begin{cases}
        S_{j-1} \cup \Delta_j & \text{if } \mathit{tx}_j \text{ succeeds}, \\
        S_{j-1} & \text{otherwise},
    \end{cases}
    \qquad j = 1, \ldots, m.
\end{equation}

The final state $S_m$ and the ordered receipt list $[R_1, \ldots, R_m]$ are deterministic given the initial state $S_0$ and the block $B$.

% ============================================================
\section{Unified Identity Layer}\label{sec:identity}

% ----------------------------------------------------------
\subsection{Cross-VM Address Derivation}

The central identity primitive is the deterministic mapping from a single $\texttt{id\_com}$ to VM-specific addresses.

\begin{definition}[Address Derivation]\label{def:addr}
For a VM $V_i$ with native address length $\ell_i$ bytes, the derived address is:
\[
    \alpha_{V_i} = \mathsf{Truncate}_{\ell_i}\big(\SHA(\mathsf{tag}_i \| \texttt{id\_com})\big)
\]
where $\mathsf{tag}_i$ is a VM-specific domain separator string and $\mathsf{Truncate}_\ell$ takes the last $\ell$ bytes of the hash output.
\end{definition}

For the $n=5$ instantiation:

\begin{align}
    \alpha_{\text{EVM}} &= \SHA(\texttt{"evm:"} \| \texttt{id\_com})[12{:}32] & &\text{(20 bytes)} \label{eq:evm} \\
    \alpha_{\text{SVM}} &= \SHA(\texttt{"svm:"} \| \texttt{id\_com}) & &\text{(32 bytes)} \label{eq:svm} \\
    \alpha_{\text{BVM}} &= \SHA(\texttt{"bvm:"} \| \texttt{id\_com}) & &\text{(32 bytes)} \label{eq:bvm} \\
    \alpha_{\text{TVM}} &= \SHA(\texttt{"tron:"} \| \texttt{id\_com})[12{:}32] & &\text{(20 bytes)} \label{eq:tvm} \\
    \alpha_{\text{native}} &= \texttt{id\_com} & &\text{(32 bytes)} \label{eq:native}
\end{align}

\begin{theorem}[Address Isolation]\label{thm:isolation}
For any two distinct VM tags $\mathsf{tag}_i \ne \mathsf{tag}_j$ and any $\texttt{id\_com}$, the derived addresses $\alpha_{V_i}$ and $\alpha_{V_j}$ are computationally independent under the collision resistance of SHA-256.
\end{theorem}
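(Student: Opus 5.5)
Since ``computationally independent'' is informal, we first fix a precise target that collision resistance can actually support. We prove two properties for every $\texttt{id\_com}$ and every pair of distinct tags. The first is \emph{distinctness and unlinkability of preimages}: the hash inputs $\mathsf{tag}_i \| \texttt{id\_com}$ and $\mathsf{tag}_j \| \texttt{id\_com}$ are distinct strings, and no efficient adversary can find $(\texttt{id\_com}, \texttt{id\_com}')$ with $\alpha_{V_i}(\texttt{id\_com}) = \alpha_{V_j}(\texttt{id\_com}')$ except with negligible probability. The second is a \emph{heuristic independence} statement in the random-oracle idealization of $\SHA$: $\alpha_{V_i}$ and $\alpha_{V_j}$ are independent uniform strings. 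Collision resistance alone does not imply output independence, so the second part needs the random oracle model, and the proof will say so explicitly.

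The first step is \emph{input injectivity}. We must show that $\mathsf{tag}_i \| x \ne \mathsf{tag}_j \| x'$ whenever $\mathsf{tag}_i \ne \mathsf{tag}_j$. Because $\texttt{id\_com}$ has fixed length 32 bytes, equality of the concatenations forces $|\mathsf{tag}_i| = |\mathsf{tag}_j|$, and then $\mathsf{tag}_i = \mathsf{tag}_j$, a contradiction. The concrete tags in \eqref{eq:evm}--\eqref{eq:tvm} do have unequal lengths in one case (\texttt{"tron:"} versus \texttt{"evm:"}). Even so, the fixed 32-byte suffix length makes the prefix length recoverable, so the encoding is injective over the set of all (tag, $\texttt{id\_com}$) pairs. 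The native address \eqref{eq:native} is handled separately: $\alpha_{\text{native}}$ is $\texttt{id\_com}$ itself, not a hash output, so a cross-VM equality involving it would require finding a hash preimage.

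The second step is the \emph{reduction}. Given an adversary $\mathcal{A}$ that outputs $x, x'$ with $\alpha_{V_i}(x) = \alpha_{V_j}(x')$, we build a collision finder that outputs the two distinct preimages $\mathsf{tag}_i\|x$ and $\mathsf{tag}_j\|x'$. Here lies the main obstacle. For the 32-byte addresses (SVM, BVM) this is a full SHA-256 collision and the reduction is immediate. For the 20-byte truncations (EVM, TVM) it is only a collision on the last 160 bits. Collision resistance of full SHA-256 does not imply collision resistance of its truncation, and generic birthday attacks give only about $2^{80}$ security there. I would handle this by stating the assumption explicitly: truncated-SHA-256 collision resistance at 160 bits, the same assumption Ethereum addresses already rely on. For mixed-length pairs (a 20-byte against a 32-byte address) the addresses live in different formats and can never be equal, so they need no argument.

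The third step is the \emph{independence} part, argued in the random oracle model. Distinct oracle queries return independent uniform outputs, and truncation preserves independence and uniformity. So, by the first step, for a fixed $\texttt{id\_com}$ the joint distribution of $(\alpha_{V_i}, \alpha_{V_j})$ is exactly the product of uniform distributions. Consequently, learning one address gives no advantage in predicting the other beyond $2^{-8\ell_j}$. We conclude that the theorem's ``computational independence'' holds as the combination of these two properties.
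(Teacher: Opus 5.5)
Your third step is the paper's entire proof. Its sketch likewise models $\SHA$ as a random oracle, notes that $\mathsf{tag}_i\|\texttt{id\_com} \ne \mathsf{tag}_j\|\texttt{id\_com}$, and concludes that the (truncated) outputs are independent.

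The real difference is your decomposition. You say openly that the theorem's hypothesis, collision resistance, cannot give output independence, so in the paper's sketch the random oracle does all the work. You also add a standard-model component: injectivity of $(\mathsf{tag},\texttt{id\_com}) \mapsto \mathsf{tag}\|\texttt{id\_com}$ across different identities, plus a reduction turning any cross-VM address coincidence into a $\SHA$ collision. That collision is full-length for SVM/BVM and 160-bit truncated for EVM/TVM, under an explicitly named assumption. The Ethereum precedent you cite is truncated Keccak-256, so that assumption is analogous rather than identical. This component is the only part that actually rests on collision resistance. It also covers cross-identity coincidences (one user's address on $V_i$ equal to another user's on $V_j$), which the paper's single-$\texttt{id\_com}$ statement does not address. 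The paper's route is shorter but yields only the idealized claim.

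Two claims need repair. First, your remark on $\alpha_{\text{native}}$ fails under your own formulation of property~(1), where both identities are arbitrary 32-byte strings. Take any $x'$ and set $x := \SHA(\texttt{"svm:"}\|x')$. Then $\alpha_{\text{native}}(x) = \alpha_{\text{SVM}}(x')$ after one forward hash evaluation, with no preimage search. The equality becomes hard only if admissible identities are restricted to honestly generated commitments with known openings; it is then a preimage-type problem for $\Poseidon$ or $\SHA$. You never impose that restriction. Since $\alpha_{\text{native}}$ carries no tag and lies outside the theorem, either add the restriction or drop the remark.

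Second, your $2^{-8\ell_j}$ bound on predicting the other address holds only for an adversary that never queries the oracle at $\mathsf{tag}_j\|\texttt{id\_com}$. Because $\texttt{id\_com}$ is public, anyone obtains $\alpha_{V_j}$ with a single query. What the product distribution actually gives is that $\alpha_{V_i}$ conveys no information about $\alpha_{V_j}$ beyond $\texttt{id\_com}$ itself. That is exactly the qualification the paper attaches right after its proof. For the same reason, property~(1) is non-collision of cross-VM addresses rather than unlinkability: a user's addresses are trivially linked through the public $\texttt{id\_com}$.
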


\begin{proof}[Proof sketch]
SHA-256 is modeled as a random oracle.
Since $\mathsf{tag}_i \ne \mathsf{tag}_j$, the inputs $\mathsf{tag}_i \| \texttt{id\_com}$ and $\mathsf{tag}_j \| \texttt{id\_com}$ are distinct, so their outputs are independently distributed.
Truncation preserves computational independence: knowledge of a truncated output does not help predict a different truncation of a different hash.
\end{proof}

This means that a compromise of a user's address on one VM reveals no information about their addresses on other VMs, beyond what is already public via the shared $\texttt{id\_com}$.

% ----------------------------------------------------------
\subsection{Reverse Resolution}

For interoperability with legacy tooling (e.g., block explorers that display EVM addresses), the dispatcher maintains a reverse index:

\begin{equation}
    \mathsf{ace\_id\_from\_evm}(\alpha_{\text{EVM}}) = \SHA(\texttt{"ace\_from\_evm:"} \| \alpha_{\text{EVM}})
\end{equation}

This is a deterministic, one-way mapping that allows EVM contracts to query the ACE-side identity for an EVM address via a precompile, without exposing the original $\texttt{id\_com}$.

% ----------------------------------------------------------
\subsection{Raw Chain Ingress}\label{sec:raw-ingress}

To achieve cold-start adoption, the \NVM{} chain must accept transactions signed with legacy wallets (MetaMask for EVM, Phantom for Solana, etc.) that have no knowledge of the \ACEGF{} attestation mechanism.

\begin{definition}[Raw Chain Transaction]\label{def:raw}
A raw chain transaction is a transaction that carries a native chain signature (ECDSA for EVM, Ed25519 for Solana, etc.) in addition to the standard payload.
The dispatcher performs:
\begin{enumerate}
    \item \textbf{Signature verification}: recover the signer's public key or address using the chain-native verification algorithm.
    \item \textbf{Identity mapping}: compute the deterministic $\texttt{id\_com}$ for the recovered address.
    \item \textbf{Payload reconstruction}: verify that the transaction payload matches the canonical form derived from the raw bytes.
    \item \textbf{Account binding}: ensure the derived $\texttt{id\_com}$ has an account in the state tree, creating one if necessary.
\end{enumerate}
\end{definition}

This mechanism supports four ingress paths:

\begin{table}[H]
\centering
\caption{Raw chain ingress verification.}
\label{tab:raw}
\begin{tabular}{llll}
\toprule
\textbf{Chain} & \textbf{Signature} & \textbf{Address Format} & \textbf{Identity Derivation} \\
\midrule
EVM & ECDSA/secp256k1 & 20-byte Keccak & $\SHA(\texttt{"legacy\_evm:"} \| \mathit{addr})$ \\
Solana & Ed25519 & 32-byte pubkey & $\SHA(\texttt{"legacy\_sol:"} \| \mathit{pubkey})$ \\
Bitcoin & ECDSA/Schnorr & P2WPKH/Taproot & $\SHA(\texttt{"legacy\_btc:"} \| \mathit{pubkey})$ \\
Tron & ECDSA/secp256k1 & 20-byte (T-prefix) & $\SHA(\texttt{"legacy\_tron:"} \| \mathit{addr})$ \\
\bottomrule
\end{tabular}
\end{table}

Raw chain ingress enables a seamless migration path: users can begin interacting with the \NVM{} chain using their existing wallets, then optionally upgrade to \ACEGF{} identities for cross-VM unification and attestation-based authorization.

% ============================================================
\section{Unified Token Ledger}\label{sec:token}

% ----------------------------------------------------------
\subsection{Design}

The fragmentation of token standards across VMs is one of the primary sources of complexity in multi-chain systems.
EVM uses ERC-20 (balance-of mapping with allowances); SVM uses SPL tokens (separate token accounts with mint/owner metadata).
These are semantically equivalent but structurally incompatible.

The \NVM{} token ledger stores all token state in a single built-in program account within the state tree, and exposes VM-native interfaces as views over this shared storage:

\begin{definition}[Unified Token State]\label{def:token}
For a token with mint identifier $M$ (32 bytes), the canonical state consists of:
\begin{itemize}
    \item $\mathsf{supply}(M)$: total supply (uint64).
    \item $\mathsf{decimals}(M)$: decimal precision (uint8).
    \item $\mathsf{authority}(M)$: mint authority $\texttt{id\_com}$ (32 bytes).
    \item $\mathsf{balance}(M, \texttt{id\_com})$: balance for identity (uint64).
    \item $\mathsf{allowance}(M, \texttt{id\_com}_{\text{owner}}, \texttt{id\_com}_{\text{spender}})$: delegated allowance (uint64).
\end{itemize}
\end{definition}

Storage slots are derived via domain-separated hashing:
\begin{align}
    \mathsf{slot}_{\text{balance}} &= \SHA(\texttt{"balance:"} \| M \| \texttt{id\_com}) \\
    \mathsf{slot}_{\text{allowance}} &= \SHA(\texttt{"allowance:"} \| M \| \texttt{id\_com}_{\text{owner}} \| \texttt{id\_com}_{\text{spender}})
\end{align}

% ----------------------------------------------------------
\subsection{Multi-Interface Access}

The same underlying balance is accessible through both EVM and SVM interfaces:

\paragraph{ERC-20 interface.}
Each mint $M$ has a deterministic ERC-20 contract address:
\begin{equation}
    \mathit{addr}_{\text{ERC20}} = \SHA(\texttt{"erc20-addr:"} \| M)[12{:}32]
\end{equation}
When an EVM contract calls \texttt{balanceOf($\alpha_{\text{EVM}}$)} on this address, the runtime:
(1)~resolves $\alpha_{\text{EVM}}$ to the corresponding $\texttt{id\_com}$ via the reverse index,
(2)~reads $\mathsf{balance}(M, \texttt{id\_com})$ from the unified ledger, and
(3)~returns the result in ERC-20-compatible ABI encoding.

\paragraph{SPL interface.}
SPL token accounts are modeled as compatibility aliases.
An associated token address (ATA) is derived using the standard Solana PDA algorithm:
\begin{equation}
    \mathit{ata} = \mathsf{FindPDA}([\mathit{owner\_pubkey}, \mathit{spl\_program\_id}, M], \mathit{ata\_program\_id})
\end{equation}
When an SVM program queries a token account's balance, the runtime resolves the ATA to the owner's $\texttt{id\_com}$ and reads from the same unified ledger.

% ----------------------------------------------------------
\subsection{Cross-VM Transfer Atomicity}

\begin{theorem}[Cross-VM Atomicity]\label{thm:atomicity}
A token transfer from an EVM context to an SVM context executes as a single state transition on the unified ledger, requiring no intermediate bridge state.
\end{theorem}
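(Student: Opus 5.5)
The argument is by unfolding the definitions rather than by any cryptographic reduction. I will show that an EVM-to-SVM transfer, whatever interface it is issued through, compiles down to a single transaction handled by one engine. That transaction touches exactly two slots of one built-in program account, and the dispatcher then commits or discards it as a unit.

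\textbf{Step 1: both endpoints resolve to the same ledger.} Take an ERC-20 \texttt{transfer} call on $\mathit{addr}_{\text{ERC20}}$ from $\alpha_{\text{EVM}}$ to a recipient whose SVM-side handle is an ATA. By the reverse index (Section~\ref{sec:identity}), the sender resolves to some $\texttt{id\_com}_s$. By the ATA resolution in the SPL interface, the recipient resolves to $\texttt{id\_com}_r$. Both reads go through Definition~\ref{def:token}, so both balances live in the same program account, at $\mathsf{slot}_{\text{balance}}(M,\texttt{id\_com}_s)$ and $\mathsf{slot}_{\text{balance}}(M,\texttt{id\_com}_r)$. There is no VM-local copy of the balance: the ERC-20 and SPL interfaces are views over this storage. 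It follows that ``EVM context'' and ``SVM context'' name two interfaces to one object, not two objects.

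\textbf{Step 2: the transfer is one state delta.} The token runtime performs one check, $\mathsf{balance}(M,\texttt{id\_com}_s)\ge v$, and then emits
\[
\Delta=\{\mathsf{slot}_s\mapsto b_s-v,\ \mathsf{slot}_r\mapsto b_r+v\}.
\]
This $\Delta$ is produced inside a single call $E.\mathsf{execute}(S,\mathit{tx})$ of Algorithm~\ref{alg:dispatch}. No lock, mint, burn, or escrow slot appears in it, so there is no intermediate bridge state.

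\textbf{Step 3: all-or-nothing commitment.} By Algorithm~\ref{alg:dispatch} and equation~\eqref{eq:block}, the post-state is either $S_{j-1}\cup\Delta$ or, after $\mathsf{Rollback}$ to the snapshot, exactly $S_{j-1}$. No reachable state contains only the debit or only the credit. Supply conservation follows immediately: the sum of the two slot values is unchanged in either branch.

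\textbf{Main obstacle.} The hard step is a transfer initiated inside an EVM contract that crosses into an SVM program through the cross-VM invocation hooks. Here two engines run, and one must argue that the nested call's writes are part of the outer transaction's $\Delta$. In particular they must not be committed separately. My plan is to model the hook as a synchronous call that writes to the same $S$ under the outer snapshot $S'$, so that a failure anywhere triggers the single rollback. I will state this as an explicit assumption on the implementation. Collisions between distinct $\texttt{id\_com}$ slots are ruled out by the domain-separated hashing, in the spirit of Theorem~\ref{thm:isolation}.
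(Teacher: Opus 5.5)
Your Steps 1--3 follow essentially the same route as the paper's proof sketch. Both interfaces resolve to the same $\texttt{id\_com}$-keyed balance slots, the transfer writes exactly those two slots inside one transaction, and atomicity follows from the commit-or-rollback semantics of Equation~\eqref{eq:block}. The nested-hook case you flag as the main obstacle is not covered by the paper, which leaves synchronous cross-VM calls and atomic multi-VM transactions as future work, so treat it as a scope restriction on the theorem rather than something your implementation assumption discharges.
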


\begin{proof}[Proof sketch]
Both the EVM and SVM interfaces resolve to operations on the same storage slots keyed by $\texttt{id\_com}$.
A transfer from $\texttt{id\_com}_A$ to $\texttt{id\_com}_B$ modifies exactly two slots:
\[
    \mathsf{balance}(M, \texttt{id\_com}_A) \mathrel{{-}{=}} v, \qquad
    \mathsf{balance}(M, \texttt{id\_com}_B) \mathrel{{+}{=}} v.
\]
Since both writes occur within a single transaction on a single state tree, they are atomic by the transaction execution semantics of Equation~\eqref{eq:block}.
No intermediate ``locked'' or ``in-flight'' state exists.
\end{proof}

This eliminates an entire class of bridge-related vulnerabilities: there is no lock-mint-burn-release cycle, no multi-signature committee, and no optimistic challenge period.

% ============================================================
\section{Parallel Execution}\label{sec:parallel}

% ----------------------------------------------------------
\subsection{Write-Set Conflict Detection}

To exploit the inherent parallelism of heterogeneous VM workloads (e.g., EVM contract calls and native transfers targeting disjoint accounts), the \NVM{} scheduler extracts write sets from each transaction and builds conflict-free batches.

\begin{definition}[Write Set]\label{def:writeset}
The write set $W(\mathit{tx})$ of a transaction is:
\[
    W(\mathit{tx}) = \begin{cases}
        \{a_1, a_2, \ldots\} & \text{if the written accounts can be statically determined}, \\
        \top \text{ (global)} & \text{otherwise (conflicts with all transactions).}
    \end{cases}
\]
\end{definition}

For transactions whose write set is statically determinable (native transfers, SVM transfers, BVM transfers), the scheduler extracts the sender and recipient account IDs directly from the payload.
For transactions with unbounded side effects (EVM \texttt{CALL}, EVM \texttt{CREATE}, SVM \texttt{INVOKE}), the write set is conservatively marked as global.

\begin{definition}[Conflict]\label{def:conflict}
Two transactions $\mathit{tx}_i$ and $\mathit{tx}_j$ conflict if:
\[
    W(\mathit{tx}_i) = \top \;\lor\; W(\mathit{tx}_j) = \top \;\lor\; W(\mathit{tx}_i) \cap W(\mathit{tx}_j) \ne \emptyset.
\]
\end{definition}

\begin{algorithm}[H]
\caption{Batch Construction}\label{alg:batch}
\begin{algorithmic}[1]
\Require Ordered transaction list $[\mathit{tx}_1, \ldots, \mathit{tx}_m]$
\Ensure Ordered list of conflict-free batches $[B_1, B_2, \ldots]$
\State $\mathit{batches} \gets [\,]$; $\mathit{current} \gets [\,]$; $\mathit{used} \gets \emptyset$
\For{$j = 1, \ldots, m$}
    \State $w \gets W(\mathit{tx}_j)$
    \If{$w = \top$ \textbf{or} $w \cap \mathit{used} \ne \emptyset$}
        \If{$|\mathit{current}| > 0$}
            \State Append $\mathit{current}$ to $\mathit{batches}$
        \EndIf
        \State Append $[\mathit{tx}_j]$ to $\mathit{batches}$ \Comment{Global tx runs alone}
        \State $\mathit{current} \gets [\,]$; $\mathit{used} \gets \emptyset$
    \Else
        \State Append $\mathit{tx}_j$ to $\mathit{current}$
        \State $\mathit{used} \gets \mathit{used} \cup w$
    \EndIf
\EndFor
\If{$|\mathit{current}| > 0$}
    \State Append $\mathit{current}$ to $\mathit{batches}$
\EndIf
\State \Return $\mathit{batches}$
\end{algorithmic}
\end{algorithm}

Transactions within the same batch are executed in parallel (via Rayon~\cite{rayon} in the implementation); batches are executed sequentially to preserve block-level determinism.

% ----------------------------------------------------------
\subsection{Context-Based Sharding}

For further parallelism, the \NVM{} dispatcher supports context-based sharding, where transactions carry an optional 16-byte context tag that determines their shard assignment.

\begin{definition}[Shard Routing]\label{def:shard}
Given $K$ shards (default $K = 64$), a VM prefix string $\mathsf{vm}$, and a context tag $\mathsf{ctx}$:
\begin{equation}
    \mathsf{shard} = \SHA\big(\mathsf{len}(\mathsf{vm}) \| \mathsf{vm} \| \mathsf{ctx}\big) \bmod K.
\end{equation}
The length prefix prevents collision ambiguity between, e.g., $(\texttt{"evm"}, \texttt{"foo:bar"})$ and $(\texttt{"evm:foo"}, \texttt{"bar"})$.
\end{definition}

Transactions targeting different shards execute independently.
A reserved \emph{shared shard} handles cross-context transactions that touch accounts in multiple shards, executing them serially to ensure atomicity.

% ----------------------------------------------------------
\subsection{Throughput Analysis}

We model throughput under three execution strategies.

\begin{table}[H]
\centering
\caption{Projected throughput under different execution strategies.}
\label{tab:throughput}
\begin{tabular}{llr}
\toprule
\textbf{Strategy} & \textbf{Description} & \textbf{TPS (projected)} \\
\midrule
Sequential & Single-threaded, all VMs & ${\sim}5{,}000$ \\
Parallel batching & Write-set conflict detection, Rayon & ${\sim}16{,}000$ \\
Context sharding & 64 shards + cross-context detection & ${\sim}33{,}000$--$66{,}000$ \\
\bottomrule
\end{tabular}
\end{table}

The throughput projections assume:
\begin{itemize}
    \item 400\,ms block time (consistent with the \ACERuntime{} pipeline~\cite{aceruntime}).
    \item ${\sim}1$--$5\,\mu$s attestation check per transaction.
    \item ${\sim}10$--$50\,\mu$s execution per transaction (VM-dependent).
    \item 16-core commodity server hardware.
    \item 70\% of transactions have statically determinable write sets (native and simple transfers).
\end{itemize}

The upper bound of ${\sim}66{,}000$ TPS assumes high shard locality (most transactions are intra-shard) and minimal cross-context traffic.

% ============================================================
\section{Cross-VM Communication}\label{sec:crossvm}

% ----------------------------------------------------------
\subsection{EVM Precompiles}

To enable EVM smart contracts to interact with the \NVM{} identity and cross-VM layers, the runtime exposes a set of precompiled contracts at reserved addresses:

\begin{table}[H]
\centering
\caption{ACE EVM precompiles.}
\label{tab:precompiles}
\begin{tabular}{lll}
\toprule
\textbf{Address} & \textbf{Name} & \textbf{Function} \\
\midrule
\texttt{0x0100} & \texttt{id\_com\_verify} & Verify an identity commitment proof \\
\texttt{0x0101} & \texttt{context\_derive} & Derive context from inputs \\
\texttt{0x0102} & \texttt{admin\_factor\_check} & Check admin factor credentials \\
\texttt{0x0103} & \texttt{zkace\_batch\_verify} & Batch-verify ZK-ACE proofs \\
\texttt{0x0104} & \texttt{multisig\_derive} & Derive multisig address \\
\texttt{0x0105} & \texttt{multisig\_verify} & Verify multisig attestations \\
\texttt{0x0106} & \texttt{cross\_vm\_call} & Encode an EVM-originated cross-VM invocation request \\
\texttt{0x0107} & \texttt{resolve\_svm\_addr} & Resolve $\texttt{id\_com} \to$ SVM address \\
\bottomrule
\end{tabular}
\end{table}

The \texttt{cross\_vm\_call} precompile (0x0106) provides an EVM-facing interface for expressing an SVM-targeted invocation request by supplying the program ID and instruction data.
In the present design, this precompile defines the call envelope and dispatcher routing hook for cross-VM requests.
A fully general synchronous return-value path with shared gas metering is left as future work.

% ----------------------------------------------------------
\subsection{SVM Syscalls}

Symmetrically, SVM programs can access the identity layer and complementary cross-VM routing hooks through custom syscalls:

\begin{itemize}
    \item \texttt{ace\_attest}: verify an attestation credential within an SVM program.
    \item \texttt{ace\_id\_com}: query the $\texttt{id\_com}$ for the current transaction sender.
    \item \texttt{ace\_cross\_vm\_call}: express an EVM-targeted cross-VM invocation request from an SVM program.
\end{itemize}

Together with the EVM precompiles, these syscalls define a bidirectional interface surface for identity access and cross-VM request routing.
They specify how multi-VM interaction is encoded and dispatched, while a fully general synchronous cross-VM execution model remains future work.

% ============================================================
\section{Security Analysis}\label{sec:security}

% ----------------------------------------------------------
\subsection{VM Isolation}

\begin{proposition}[Execution Isolation]\label{prop:isolation}
A failure (revert, out-of-gas, panic) in VM $V_i$ during execution of transaction $\mathit{tx}$ does not affect the state of any other VM $V_j$ ($j \ne i$) or any subsequently executed transaction.
\end{proposition}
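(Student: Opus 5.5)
The plan is to argue directly from the dispatch semantics of Algorithm~\ref{alg:dispatch} and the block transition rule of Equation~\eqref{eq:block}, treating the shared state tree as the only channel through which one VM's execution can influence another's. The proof has two parts, matching the two claims: that other VMs' state is unaffected, and that subsequent transactions are unaffected.

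\textbf{Step 1: the failure is confined to a rolled-back interval.} Fix $\mathit{tx}$ with $\mathsf{Route}(P[0]) = V_i$. By Definition~\ref{def:routing} the ranges are disjoint, so exactly one engine $E_i = \mathcal{E}[V_i]$ is invoked. Algorithm~\ref{alg:dispatch} takes a snapshot $S' = \mathsf{Snapshot}(S)$ before calling $E_i.\mathsf{execute}$.

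I would model each failure mode as producing a receipt with $R.\mathit{success} = \mathtt{false}$:
\begin{itemize}
\item a revert does so by definition;
\item out-of-gas is converted into such a receipt by the engine;
\item a panic is caught at the dispatcher boundary (e.g.\ an unwind guard) and converted into a failure receipt.
\end{itemize}
In every case the branch $\mathsf{Rollback}(S, S')$ executes. This restores $S$ to exactly its pre-execution value, including any writes $E_i$ made into regions belonging to other VMs, such as the unified ledger slots of Definition~\ref{def:token}. Hence the post-state equals the pre-state. This is precisely the second case of Equation~\eqref{eq:block}, $S_j = S_{j-1}$, so in particular the state of every $V_j$ with $j \ne i$ is unchanged.

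\textbf{Step 2: induction over later transactions.} Because the dispatcher is deterministic given $S_0$ and $B$, every later transaction $\mathit{tx}_k$ ($k > j$) is executed against a state sequence identical to the one obtained by deleting $\mathit{tx}_j$ from the block. The only residual effect of the failed transaction is its receipt $R_j$ in the receipt list, which no engine reads through the state tree. An induction on $k$ then shows $S_k$ and $R_k$ coincide with the counterfactual run, which proves the second claim. For parallel mode, I would add a remark that batches from Algorithm~\ref{alg:batch} have disjoint write sets, and global transactions run alone. Per-transaction rollback therefore touches only that transaction's own writes, and the batch result equals some sequential order.

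\textbf{Main obstacle.} The hardest part is Step~1's assumption that \emph{every} failure mode, panics in particular, reaches the rollback branch, and that $\mathsf{Snapshot}/\mathsf{Rollback}$ is a complete restore with no side channels. Such side channels include engine-internal caches, or nonces and fees deliberately charged on failure. My first move would be to state these as explicit hypotheses on the engine interface of Definition~\ref{def:engine}: engines are pure functions of $(\mathit{state}, \mathit{tx})$, panics are caught, and any fee deduction is applied after rollback to the sender's own account only. I would then present the proposition as holding under those hypotheses.
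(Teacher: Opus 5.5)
Your proposal is correct and takes the same route as the paper's sketch: a snapshot before dispatch, rollback on any failure, and the observation that the state tree is the only shared channel, so a failed transaction yields $S_j = S_{j-1}$. Your induction over later transactions and your explicit hypotheses (panics caught at the dispatcher boundary, complete restore, fees charged only after rollback) make precise what the paper leaves implicit. One caution on your parallel-mode aside, which the paper does not attempt: disjoint write sets do not by themselves make the whole-state $\mathsf{Rollback}(S, S')$ of Algorithm~\ref{alg:dispatch} safe when batch members run concurrently. That remark also needs rollback scoped to each transaction's own $\Delta$, for example by running each transaction on its own overlay and merging only on success.
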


\begin{proof}[Proof sketch]
By Algorithm~\ref{alg:dispatch}, the dispatcher takes a state snapshot before delegating to any engine.
On failure, the state is rolled back to the snapshot.
Since the state tree is the sole shared resource and rollback restores it to the pre-execution state, no side effects from the failed transaction persist.
\end{proof}

% ----------------------------------------------------------
\subsection{Identity Binding Correctness}

\begin{proposition}[Raw Chain Binding]\label{prop:binding}
For a raw chain transaction from chain $c$ with recovered address $\mathit{addr}$, the binding to $\texttt{id\_com} = \SHA(\texttt{"legacy\_}c\texttt{:"} \| \mathit{addr})$ is:
\begin{enumerate}
    \item \textbf{Deterministic}: the same address always maps to the same $\texttt{id\_com}$.
    \item \textbf{Collision-resistant}: two distinct addresses map to different $\texttt{id\_com}$ values with overwhelming probability.
    \item \textbf{Domain-separated}: addresses from different chains map to different $\texttt{id\_com}$ values even if the raw bytes are identical.
\end{enumerate}
\end{proposition}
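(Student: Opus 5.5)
The plan is to treat the three properties separately. Each one comes down to a standard property of $\SHA$ applied to the tagged input $x_c(\mathit{addr}) = \texttt{"legacy\_}c\texttt{:"} \| \mathit{addr}$.

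\textbf{Determinism.} This is immediate. $\SHA$ is a deterministic function, and the tag $\texttt{"legacy\_}c\texttt{:"}$ is a fixed constant for chain $c$, as listed in Table~\ref{tab:raw}. The chain-native recovery in step~1 of Definition~\ref{def:raw} is also a deterministic function of the signed bytes. So the same $\mathit{addr}$ on the same chain always yields the same input string, and therefore the same $\texttt{id\_com}$.

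\textbf{Collision resistance.} I would argue by reduction. Suppose an adversary outputs $\mathit{addr} \ne \mathit{addr}'$ on chain $c$ with equal images. Then the strings $x_c(\mathit{addr})$ and $x_c(\mathit{addr}')$ share the prefix, and their suffixes differ. The suffixes may differ either in content or, for variable-length encodings such as the Bitcoin P2WPKH versus Taproot forms, in length. In both cases the two strings are distinct. They hash to the same 32-byte value, so the adversary yields an $\SHA$ collision with the same advantage. The ``overwhelming probability'' clause then follows from assuming collision resistance, or, as in the proof of Theorem~\ref{thm:isolation}, from the random-oracle birthday bound $q^2/2^{257}$.

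\textbf{Domain separation.} This is the step I expect to be the main obstacle, because tagged concatenation is only injective if no tag can be confused with another tag followed by part of an address. The tags $\texttt{legacy\_evm:}$, $\texttt{legacy\_sol:}$, $\texttt{legacy\_btc:}$ and $\texttt{legacy\_tron:}$ all share the prefix $\texttt{legacy\_}$. After that prefix, the next characters $\texttt{e}, \texttt{s}, \texttt{b}, \texttt{t}$ are pairwise distinct. Hence for $c \ne c'$ the inputs $x_c(\mathit{addr})$ and $x_{c'}(\mathit{addr}')$ differ at a fixed position, whatever the address bytes are, in particular even when $\mathit{addr} = \mathit{addr}'$. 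The tag set is therefore prefix-free, and this is what must be checked explicitly. Distinct inputs then reduce to the collision-resistance argument above, so equal outputs across chains again give an $\SHA$ collision.

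I would close with two remarks. First, a remark on scope: the proposition bounds only collisions \emph{within} the legacy namespace. Separation from the native $\ACEGF$ commitments of Equation~\eqref{eq:idcom} holds only heuristically, since those come from $\Poseidon$ rather than $\SHA$, and would need a random-oracle argument over the combined output space. Second, if new chains are added, their tags must preserve the prefix-free property, which a length prefix as in Definition~\ref{def:shard} would guarantee.
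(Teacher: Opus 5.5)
Your proposal is correct and follows the same route as the paper: determinism and collision resistance come directly from the corresponding properties of $\SHA$, and domain separation comes from the distinct chain prefixes making the hash inputs distinct. You also check explicitly that the tags differ at a fixed position, so inputs stay distinct even when the addresses themselves differ, and you make the final appeal to collision resistance explicit; both tighten steps that the paper's one-line argument leaves implicit.
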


\begin{proof}
Properties (1) and (2) follow directly from SHA-256 being a deterministic, collision-resistant hash function.
Property (3) follows from the chain-specific prefix: if $c_1 \ne c_2$, then $\texttt{"legacy\_}c_1\texttt{:"} \ne \texttt{"legacy\_}c_2\texttt{:"}$, so the inputs to SHA-256 are distinct even when $\mathit{addr}_1 = \mathit{addr}_2$.
\end{proof}

% ----------------------------------------------------------
\subsection{Deterministic Execution}

\begin{proposition}[Block-Level Determinism]\label{prop:determinism}
Given initial state $S_0$ and block $B$, the final state $S_m$ and receipt list $[R_1, \ldots, R_m]$ are identical across all validators.
\end{proposition}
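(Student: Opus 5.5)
The plan is an induction on the transaction index $j$ in Equation~\eqref{eq:block}, with the hypothesis that every honest validator holds the same state $S_{j-1}$ and the same receipt prefix $[R_1,\ldots,R_{j-1}]$. The base case $j=0$ holds because $S_0$ is agreed by assumption: it is the committed state of the previous block under BFT consensus. For the inductive step I would trace Algorithm~\ref{alg:dispatch} line by line and check that each step is a deterministic function of $(S_{j-1}, \mathit{tx}_j)$.

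First, the opcode $P[0]$ is a byte of the payload, and the ranges $[l_i,u_i]$ are pairwise disjoint and fixed in the registry. So $\mathsf{Route}$ (Definition~\ref{def:routing}) is a well-defined function, and every validator selects the same $V$, or rejects with the same receipt. Second, the registry $\mathcal{E}$ is part of the protocol specification, so $E=\mathcal{E}[V]$ agrees across validators. Third, $E.\mathsf{execute}(S_{j-1},\mathit{tx}_j)$ returns the same $(R_j,\Delta_j)$ everywhere. This uses the assumption, which I would state explicitly, that each engine of Definition~\ref{def:engine} is a pure function of its inputs: no wall-clock time, no local randomness, no unordered iteration, and no floating point. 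Fourth, the success bit $R_j.\mathit{success}$ is therefore identical, so the case split in Equation~\eqref{eq:block}, or equivalently the rollback to $\mathsf{Snapshot}(S_{j-1})$ justified by Proposition~\ref{prop:isolation}, yields the same $S_j$. Any identity-layer step inside execution, such as raw-chain identity mapping (Proposition~\ref{prop:binding}) or address derivation (Definition~\ref{def:addr}), is SHA-256 of agreed bytes and hence deterministic.

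The main obstacle is the parallel mode of Section~\ref{sec:parallel}. There the claim must be that batched execution produces the same $S_m$ as the sequential fold. I would argue in three steps:
\begin{enumerate}
\item Algorithm~\ref{alg:batch} is itself a deterministic function of the ordered list. It preserves order and places every $\top$ transaction alone.
\item Within a batch, write sets are pairwise disjoint by Definition~\ref{def:conflict}. The updates $\Delta$ therefore commute, and any thread interleaving gives the same union.
\item Batches are applied in order, so the result equals the sequential fold.
\end{enumerate}

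The weak point is that the paper defines conflicts only on write sets. Disjoint writes do not by themselves exclude a read of an account that another transaction in the same batch writes. I would close this gap in one of two ways. One option is to take $W$ to over-approximate all touched accounts, i.e.\ reads plus writes; this holds for sender/recipient transfers, which only read the accounts they write. The other is to require that engines read only from the batch-start snapshot. Receipts are then collected by transaction index rather than by completion order.

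Sharding is handled the same way. Shard assignment is a SHA-256 of agreed data, shards touch disjoint state, and the shared shard runs serially.
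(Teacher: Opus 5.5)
Your proposal is correct and uses the same decomposition as the paper's sketch: deterministic opcode routing, deterministic engines, deterministic snapshot/rollback, conflict-free batches from Algorithm~\ref{alg:batch} matching sequential execution, and deterministic shard routing, made more rigorous by your explicit induction on $j$ and the stated purity assumption. Your observation that write-set disjointness alone does not exclude read--write interference is a real refinement of the paper's ``conflict-free by construction'' step; note that only your first fix (taking $W$ to cover reads as well as writes) recovers equality with the sequential fold of Equation~\eqref{eq:block}, whereas reading from the batch-start snapshot gives cross-validator determinism but possibly a different final state.
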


\begin{proof}[Proof sketch]
The opcode routing function (Definition~\ref{def:routing}) is deterministic.
Each VM engine's \texttt{execute} function is deterministic (revm in Shanghai mode; SVM with fixed built-in programs; BVM with deterministic Script evaluation).
State snapshots and rollbacks are deterministic.
Parallel execution within batches produces the same result as sequential execution because transactions in the same batch are conflict-free by construction (Algorithm~\ref{alg:batch}).
Context shard routing (Definition~\ref{def:shard}) is deterministic.
\end{proof}

% ============================================================
\section{Implementation}\label{sec:implementation}

% ----------------------------------------------------------
\subsection{System Overview}

The \NVM{} dispatcher is implemented in Rust as the \texttt{ace-n-vm} crate, consisting of approximately 5,000 lines of Rust code plus 3,000 lines of tests.
Table~\ref{tab:engines} summarizes the five VM engines in the concrete instantiation.

\begin{table}[H]
\centering
\caption{Implemented VM engines ($n=5$).}
\label{tab:engines}
\begin{tabularx}{\textwidth}{llll>{\raggedright\arraybackslash}X}
\toprule
\textbf{VM} & \textbf{Opcodes} & \textbf{Backend} & \textbf{Address} & \textbf{Capabilities} \\
\midrule
Native & \texttt{0x01}--\texttt{0x0F} & \texttt{ace\_engine} & 32-byte $\texttt{id\_com}$ & Transfer, account creation, auth key management \\
EVM & \texttt{0x10}--\texttt{0x1F} & revm v19 (Shanghai) & 20-byte & Full EVM: Solidity, Vyper, ERC-20/721/1155/4626 \\
SVM & \texttt{0x20}--\texttt{0x2F} & Built-in programs & 32-byte & SPL tokens, SystemProgram, PDAs \\
BVM & \texttt{0x30}--\texttt{0x3F} & Script interpreter & 32-byte & Bitcoin Script, UTXO management, P2WPKH/Taproot \\
TVM & \texttt{0x40}--\texttt{0x4F} & revm (remapped) & 20-byte & Tron compatibility via opcode remapping \\
\bottomrule
\end{tabularx}
\end{table}

% ----------------------------------------------------------
\subsection{TVM as EVM Delegation}

The TVM engine demonstrates the extensibility of the \NVM{} architecture: rather than implementing a separate execution engine, TVM reuses the EVM engine with opcode remapping:

\begin{equation}
    \mathsf{TVM.execute}(\mathit{tx}) = \mathsf{EVM.execute}\big(\mathsf{remap}(\mathit{tx})\big)
\end{equation}

where $\mathsf{remap}$ translates opcodes $\texttt{0x40} \to \texttt{0x10}$, $\texttt{0x41} \to \texttt{0x11}$, $\texttt{0x42} \to \texttt{0x12}$, and adjusts the address namespace from Tron to EVM.
This approach adds a new VM with minimal code while preserving a Tron-compatible execution model for the remapped opcode subset.

% ----------------------------------------------------------
\subsection{Raw Chain Verification}

Each raw chain ingress path implements chain-specific signature recovery and payload canonicalization:

\begin{itemize}
    \item \textbf{EVM}: recover the secp256k1 signer via \texttt{ecrecover}, verify chain ID matches the attestation domain, and reconstruct the canonical payload from the raw RLP-encoded transaction.
    \item \textbf{Solana}: verify the Ed25519 signature, extract the transfer parameters, and compute the sender's legacy $\texttt{id\_com}$.
    Solana replay protection is enforced via per-slot signature deduplication.
    \item \textbf{Bitcoin}: verify ECDSA or Schnorr signatures depending on the output type (P2PKH, P2WPKH, P2WSH, Taproot), reconstruct the canonical payload from the serialized transaction.
    \item \textbf{Tron}: similar to EVM but with Tron-specific protobuf encoding and domain slot verification.
\end{itemize}

% ============================================================
\section{Related Work}\label{sec:related}

\paragraph{Multi-VM Layer-1 Chains.}
Sei v2~\cite{sei} combines EVM and CosmWasm execution within a Cosmos SDK chain, using pointer contracts to bridge state between VMs.
However, pointer contracts introduce an indirection layer that does not achieve unified identity: EVM and CosmWasm addresses remain distinct, and token transfers between VMs require explicit pointer interactions.
Artela~\cite{artela} extends EVM with WebAssembly extensions for custom runtime modules, but the WASM layer is subordinate to EVM rather than a co-equal execution environment.

\paragraph{Multi-VM Layer-2 Systems.}
Movement Labs~\cite{movement} deploys an EVM-compatible execution layer atop a Move-based settlement chain.
Eclipse~\cite{eclipse} runs Solana's SVM as an Ethereum Layer-2 rollup.
These approaches inherit the trust assumptions and latency of their respective L1/L2 settlement mechanisms, and identity remains fragmented across layers.

\paragraph{Cross-Chain Interoperability.}
Polkadot's XCM~\cite{polkadot} and Cosmos's IBC~\cite{ibc} enable message passing across independent chains but do not provide unified identity or shared state.
Each chain maintains its own account model, and token transfers require lock-mint or burn-release bridge mechanisms.

\paragraph{Parallel Execution.}
Solana's Sealevel~\cite{solana} pioneered account-level parallelism by requiring transactions to declare accessed accounts upfront.
Aptos's Block-STM~\cite{aptos} uses optimistic concurrency control with re-execution on conflict.
The \NVM{} scheduler combines write-set extraction with context-based sharding, targeting cross-VM parallelism rather than single-VM account-level parallelism.

% ============================================================
\section{Conclusion}\label{sec:conclusion}

We have presented \NVM{}, a generalized architecture for hosting $n$ heterogeneous virtual machines as co-equal execution engines on a single Layer-1 blockchain.
The three pillars of the design---opcode-based dispatch, unified identity via \ACEGF{} commitments, and a single-ledger token runtime with multi-interface access---together eliminate the need for bridges, wrapped tokens, or multiple wallets when operating across different execution environments.

The architecture is instantiated with $n=5$ VMs (Native, EVM, SVM, BVM, TVM), but the framework is parametric in $n$: adding a new VM requires only implementing the engine interface and registering an opcode range, with no changes to the identity layer, token ledger, or existing engines.
The TVM engine, which reuses the existing EVM backend through opcode remapping and namespace translation, demonstrates this extensibility in practice.

Under the analytical assumptions of Section~\ref{sec:parallel}, write-set conflict detection and context-based sharding suggest a potential throughput range of ${\sim}16{,}000$--$66{,}000$ TPS on commodity hardware.

Looking forward, several extensions are natural:
\begin{itemize}
    \item \textbf{WASM VM.} Adding a WebAssembly execution engine (e.g., for CosmWasm or ink! contracts) as VM$_6$ would extend compatibility to the Cosmos and Polkadot ecosystems.
    \item \textbf{Move VM.} Integrating the Move execution engine would provide compatibility with Aptos and Sui smart contracts.
    \item \textbf{Speculative parallel execution.} Replacing the conservative write-set scheduler with optimistic concurrency control (Block-STM style) could improve parallelism for EVM and SVM transactions whose write sets cannot be statically determined.
    \item \textbf{Cross-VM composability.} Generalizing the current precompile/syscall request hooks into synchronous cross-VM calls with shared gas metering would enable atomic multi-VM transactions.
\end{itemize}

% ============================================================
\bibliographystyle{plain}

\end{document}